\newtheorem{lemma}{Lemma}
\newtheorem{theorem}{Theorem}
\def\LSB{\left[}        
\def\RSB{\right]}       
\def\LB{\left(}         
\def\RB{\right)}        
\newfont{\bbb}{msbm10 scaled 500}
\newfont{\bb}{msbm10 scaled 1100}
\newcommand{\Em}{{\bf E}}
\newcommand{\Pm}{{\bf P}}
\newcommand{\piv}{\hbox{\boldmath$\pi$}}
\newcommand{\defines}{{\,\,\stackrel{\scriptscriptstyle \bigtriangleup}{=}\,\,}}
\def\argmin{\operatornamewithlimits{arg\,min}}
\newtheorem{example}{Example}
\newcommand{\beqa}{\begin{eqnarray}}
\newcommand{\eeqa}{\end{eqnarray}}
\newcommand{\dsp}{\displaystyle}
\begin{document}

\title{Cooperation and Storage Tradeoffs in Power-Grids with Renewable Energy Resources}
\author{Subhash~Lakshminarayana,~\IEEEmembership{Member, IEEE}, Tony Q. S. Quek,~\IEEEmembership{Senior Member, IEEE} and H. Vincent Poor,~\IEEEmembership{Fellow, IEEE}
\thanks{
Manuscript received 22 Oct, 2013; accepted 28 Mar, 2014. 
This paper was presented in part at the IEEE INFOCOM Workshop on Communications and Control for Smart Energy Systems, Toronto, Canada, April, 2014.
This research was supported, in
part, by the SUTD-MIT International Design Centre under Grant
IDSF1200106OH.
\newline S. Lakshminarayana is with the Singapore University of Technology and Design, Singapore (email: subhash@sutd.edu.sg). 
\newline T.Q.S. Quek is with the Singapore University of Technology and Design, Singapore and the Institute for Infocomm Research, Singapore (email: tonyquek@sutd.edu.sg). 
\newline H. V. Poor is
with the Department of Electrical Engineering, Princeton University, Princeton, NJ, USA (email: poor@princeton.edu). 
}} 

\maketitle

\begin{abstract}
One of the most important challenges 
in smart grid systems 
is the integration of renewable energy 
resources into its design. 
In this work, two different 
techniques to mitigate the time varying 
and intermittent nature of renewable energy
generation are considered. The first one is the use of storage,
which smooths out the fluctuations in the renewable energy generation across time.
The second technique is the concept of
distributed generation combined with cooperation by exchanging
energy among the distributed sources.
This technique averages out the variation in energy production across space. This paper analyzes the trade-off between these two techniques.
The problem is formulated as a stochastic optimization
problem with the objective of minimizing 
the time average cost of energy exchange within the grid.
First, an analytical model
of the optimal cost is provided by investigating the steady state 
of the system for some specific scenarios.
Then, an algorithm to solve the cost minimization problem using the technique of Lyapunov optimization is developed
and results for the performance 
of the algorithm are provided.
These results show that in the presence of limited
storage devices, the grid can benefit greatly from cooperation,
whereas in the presence of large storage capacity, cooperation
does not yield much benefit. Further, it is observed that most
of the gains from cooperation can be obtained by 
exchanging energy only among a few energy harvesting sources.
\end{abstract}

\begin{keywords}
Renewable energy, Micro-grids, Cooperation, Storage, Lyapunov optimization
\end{keywords}

\section{Introduction}
Renewable energy provides a greener alternative
to traditional fossil fuel based electric
power generation.
Thus, there has been significant emphasis on integration 
of renewable energy into smart grid design \cite{VariableGen2009}.
However, a significant challenge lies in 
the inherently stochastic 
and intermittent nature of renewable energy production.
A popular technique to compensate 
for this is the use of expensive fast-ramping fuel-based generators as a back-up. However, with greater penetration
of renewable energy this technique is no longer cost effective \cite{HartCarbonEm2011}.
As a result, there is a need for more cost effective solutions such as the use of energy storage \cite{RobertsGridPower2009},
and load scheduling by demand-response \cite{FERC2006}.

Prior work on design and analysis of 
renewable energy with storage includes
\cite{ChandyLow2010},
which formulates the finite horizon optimal 
power flow problem with storage as a
convex optimization problem.
This work shows that for the special case of a single generator and single load, the optimal policy is
to charge the battery at the beginning 
and discharge towards the end of the 
time horizon.
\cite{SuGamalStorage2011} and \cite{KostTass2011} formulate
the problem as a dynamic programming problem and
derive threshold based control policies for battery
charging and discharging decisions.
In \cite{UrgaonkarSigmetric2011}, storage is used as a 
means to reduce the time average electricity bills 
in data center applications. Using a Lyapunov optimization
based approach, this work shows that increasing the storage capacity
results in a greater reduction in the electricity bills.
Other relevant works include \cite{PaateroStorage2005} and \cite{BitarRajgopal2011} (and references within).
On the other hand, prior work on demand
response includes \cite{KimPoor2011},
which formulates the problem of scheduling the power
consumption as 
a Markov decision problem, where the scheduler
has access to the past and the current prices, but only statistical
knowledge about future prices.
It is shown that incorporating the statistical knowledge into the scheduling policies can result in significant savings.
\cite{LiChenLowPEGM2011}
considers the problem of demand response in a finite 
time domain and solves the problem using
convex optimization based techniques.
The multi-period power procurement 
is handled in \cite{JiangLowCDC2011}, 
where it is solved using stochastic gradient based techniques.
The combination of storage and demand-response 
has been examined in \cite{HuangWalrandRamSGComm2012},
and is solved using Lyapunov optimization based approach
with the conclusion that storage combined with demand response
can give greater cost savings.

One of the other techniques to combat the intermittent nature of renewable energy that has been explored 
relatively less, is the use cooperation in distributed power generation units \cite{WesternWind2009}. 
The idea is to  exploit the averaging effect
produced by diversity in renewable energy production across different geographical areas. By enabling cooperation, areas that have excess production can transfer
energy to areas that are deficient. 
For example, studies have been conducted by monitoring the renewable energy production across 5 different states in the United States (Arizona, Colorado, Nevada, New Mexico, and Wyoming).  
These studies have shown that while the variability of the load greatly increases with an increase in penetration of renewable energy in the individual states, aggregating the diverse
renewable resources over these geographic areas
leads to only a slight increase in the load variability \cite{WesternWind2009}.
This also leads to a very substantial reduction in the operating cost
of the grid as well (\$2 billion in this case).
In terms of analytical results, the impact of aggregation of wind power has been considered
in the framework of coalitional game theory in
\cite{BitarCoalition2011} and \cite{SaadHanPoorICC2011}, where it is shown that independent wind producers can 
benefit by aggregating their harvested energy.
Distributed energy production
has also been studied within the framework 
of \emph{micro-grids} (MGs) \cite{Microgrids2007}, with a focus on
distributed storage and decentralized control of 
MG networks \cite{SmitDistributedStorage2013},\cite{MGDecControl2013}.
Energy sharing among MGs has also been studied 
in \cite{TingZhuMGsEnSharing2013} via simulations, and 
shown to reduce energy losses in the network.

While all of the above mentioned works address the issues of storage, demand response and aggregation individually, the combination of the dual averaging effect
produced by storage and cooperation by energy sharing has
not been explored.
The objective of this work is to provide
an analytical framework for studying the trade-off present
between storage and cooperation.
We consider a scenario consisting of 
MGs that are powered by harvesting renewable
energy and are serving their respective loads.
The MGs have finite capacity storage units and can cooperate by transferring energy among themselves. For any excess load, the MGs can borrow energy from the macro-grid. The objective is to minimize
the time average cost of energy exchange within the entire grid.
We first provide an analytical characterization of 
the optimal cost by examining the steady state behavior of
the system for some particular settings.
We then provide an online algorithm to solve
the optimization problem using the technique of
Lyapunov optimization \cite{NeelyBook}.
The control decision to be taken at each time
slot is how to divide the excess renewable energy optimally  between storage and cooperation, and how much
energy is to be borrowed from the macro-grid.
We analyze the optimal cost
as a function of the storage capacity and the number 
of cooperating MGs.
We also investigate the following question: for a given number of cooperating MGs, what is the storage capacity needed
in order to be self sufficient (i.e.
to eliminate the need for energy transfer from the macro-grid).

Our result shows that  
when the storage capacity is low, cooperation
among the MGs yields a significant reduction in
the time average cost of energy exchange.
However, when the MGs have a large
storage capacity, cooperation does 
not yield much benefit. This is because each MGs
can simply store all its excess harvested energy and
use it during the time slots when it is deficient.
Further, most of the gains are obtained by
cooperation among only a few neighboring MGs.

The rest of the paper is organized as 
follows. We present the system model 
and provide the problem formulation of minimizing the time 
average cost of energy transfer across the grid in Section \ref{sec:sysmodel}. We provide the analytical modeling of the optimal
cost for some special cases in Section \ref{sec:analoptcost}.
Then, in Section \ref{sec:LypOpt}, we present an algorithm to solve the time average cost minimization problem using the technique of Lyapunov optimization, and provide
results for the algorithm performance.
Numerical results are presented in Section \ref{sec:numresult}
followed by conclusions in Section \ref{sec:conclusion}.
Finally, the proofs of some results in the 
paper are presented in Appendices A, B and C.
 
\section{System Model}
\label{sec:sysmodel}

\begin{figure*}
\begin{center}
     \begin{scriptsize}
       \def\svgwidth{1.5\columnwidth}
       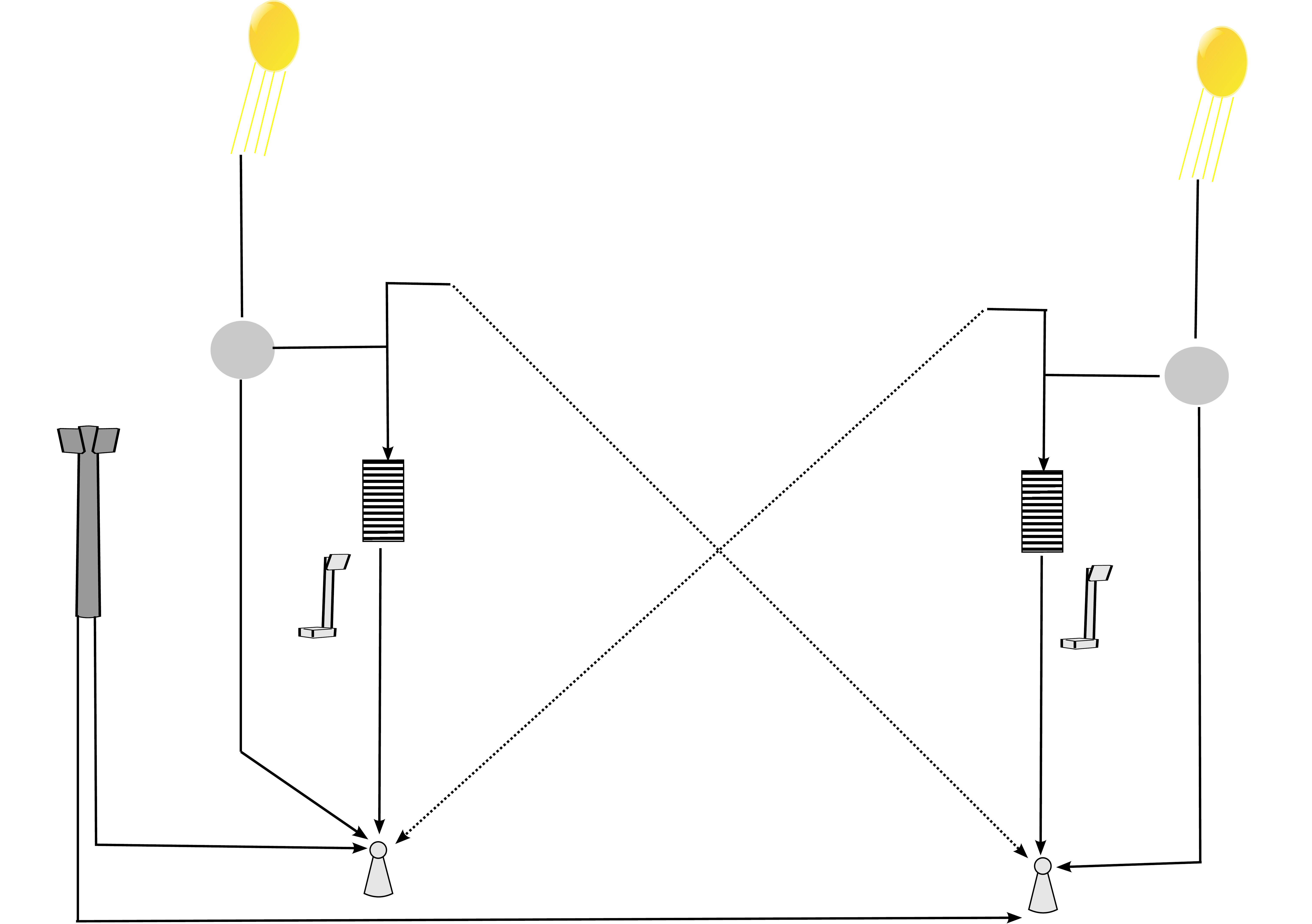
       \caption{Power grid consisting of micro-grids and a macro-grid.}
       \label{fig:MicroGrids}
       \end{scriptsize}
       \end{center}
\end{figure*}
We consider an inter-connected power grid consisting of $N$ MGs and a macro-grid as shown in Figure \ref{fig:MicroGrids}. 
The MGs are capable of harvesting renewable energy (e.g. wind, solar energy etc).
In addition, the MGs are equipped with batteries
in which they can store the harvested energy for future use.

\subsection{Energy Supply, Demand and Distribution Models}
\subsubsection{Energy Generation}
MG$_i$ harvests $X_i[t], \ i = 1,\dots,N$ units of energy during the time slot $t,$ which is the only source of energy generation at the MG. We assume that the harvested energy $X_i[t]$ evolves according to an independent and identically distributed (i.i.d.) random process across time\footnote{The i.i.d. assumption is made for the sake of convenience of illustrations and technical proofs. We note that the algorithm developed in this paper can also be extended to the case where the energy harvesting process in Markovian.}. However, the energy 
harvesting process can be arbitrarily correlated across different MGs. The macro-grid generates energy from convention energy sources.
We assume that the macro-grid has a very large supply of energy
(and do not impose any constraint on its energy generation).

\subsubsection{Load Serving}
MG$_i$ serves a set of users
whose aggregate energy\footnote{With slight abuse of terminology, we use the terms power and energy interchangeably.} demand is $L_i[t]$ units of energy per time slot. The energy demand is bounded as $L_i[t] \leq L_{\max}$, for finite $L_{\max}.$ The energy demand is met in the following manner. 

Firstly, the harvested energy is used to serve the load $L_i[t].$
We consider the two cases as follows: \\
$\bullet$
If $X_i[t] < L_i[t],$ then MG$_i$ uses all the harvested energy to serve its load. The unsatisfied load is denoted by
$\tilde{L}_i[t] = (L_i[t]-X_i[t])^+$ and the MG$_i$ 
does the following to serve the unsatisfied load:
\begin{itemize}
\item[1.] Draw energy stored in its own battery

The MG$_i$ uses $B_{i,i}[t]$ units 
of energy from the energy stored in its own battery to serve the unsatisfied load to its respective users.

\item[2.] Exchange energy among the MGs

In addition, MG$_i$ can borrow 
$B_{j,i}[t]$ units of energy from 
$MG_{j}$ such that $j \neq i$ where $B_{j,i}[t]$ is bounded as$B_{j,i}[t] \leq B^\text{ex}_{\max},$ for some $B^\text{ex}_{\max} < \infty.$ 
Note that $B_{j,i}[t] > 0$ only when $\tilde{X}_j[t] = (X_j[t]-L_j[t])^+ > 0,$ i.e., MG$_j$ has excess harvested 
energy (i.e. the harvested energy is greater than its demand).

\item[3.] Transfer energy from the macro-grid

In case the energy from the battery and the
energy borrowed from neighboring MGs is insufficient to satisfy the demand, MG$_i$ can borrow $G_i[t]$ units of energy from the marco-grid.
\end{itemize}
The sum of energy drawn from the battery, energy exchange with the neighboring MGs, and the energy borrowed
from the macro-grid must satisfy the residual demand, i.e.,
\begin{align}
B_{i,i}[t]+ \sum_{j \neq i} & B_{j,i}[t] + G_i[t]
= \tilde{L}_i[t] \nonumber \\ & \forall t, \ i = 1,\dots,N. \label{eqn:energyTxfer}
\end{align}

$\bullet$ Now we consider the second case, in which the harvested energy exceeds the energy demand, i.e., if $X_i[t] \geq L_i[t],$ then the MG does the following:
\begin{itemize}
\item[1.] As previously mentioned, it can donate an amount 
$B_{i,j}[t]$ to satisfy the load of MG$_j.$
\item[2.] Store an amount $Y_{i}[t] \leq Y_{\max}$
(where $Y_{\max} < \infty$) in its own battery to be used at a later time.
Accordingly, at each time $t,$ we have
\begin{align}
Y_i[t] + \sum_{j \neq i} B_{i,j}[t] \leq \tilde{X}_i[t] \qquad \forall t, \ i = 1,\dots,N. \label{eqn:InputEnergy}
\end{align}
\end{itemize}

\subsubsection{Energy Storage}
Next, we consider the energy model for the battery at the MGs. At MG$_{i}$, the battery evolves according to the following rule:
\begin{align}
E_i[t+1] = E_i[t]- B_{i,i}[t] +Y_i[t] \ \forall t, \ i = 1,\dots,N \label{eqn:batteryEvol},
\end{align}
where the energy availability constrains the battery at each MG$_{i}$ to satisfy
\begin{align}
B_{i,i}[t] \leq E_i[t]
\ \forall t, \ i = 1,\dots,N \label{eqn:battEnAvailab}.
\end{align}
We also impose a battery discharge constraint during every time slot $t$, namely, 
\begin{align}
B_{i,i}[t] \leq B^s_{\max}
\qquad \ \  \forall t, \ i = 1,\dots,N \label{eqn:battDis},
\end{align}
where $B^s_{\max}$ is the maximum discharge of the battery
per time slot.
Furthermore, the energy storage device has finite capacity of $E_{\max}$ units as follows:
\begin{align}
E_i[t] \leq E_{\max} \ \ \forall t, \ i = 1,\dots,N \label{eqn:battCap}.
\end{align}
Additionally, we make the following practical assumption
on the battery capacity:
\begin{align}
E_{\max} > Y_{\max}+B^s_{\max} \label{eqn:pracconstraint}.
\end{align}
The constraints \eqref{eqn:battEnAvailab} and \eqref{eqn:battDis} can be combined as follows:
we have
\begin{align}
B_{i,i}[t] \leq \min(E_i[t],B^s_{\max}) \ \ \forall t , \ i = 1,2,\dots,N \label{eqn:consolidatedBattDis}.
\end{align}
Similarly the battery input energy constraint
$Y_i[t] \leq Y_{\max}$ and \eqref{eqn:battCap} can be combined as
\begin{align}
Y_{i}[t] \leq \min(E_{\max}-E_i[t],Y_{\max}) \ \ \forall t , \ i = 1,2,\dots,N \label{eqn:consolidatedBattCharge}. 
\end{align}

\subsubsection{Cost Model and Problem Formulation}
We consider that transferring energy
from MG$_i$ to MG$_j$ during the time slot $t$ incurs a cost of $p_{i,j}[t]$ per unit.
Similarly, transferring energy
from the macro-grid to MG$_i$ incurs a cost of $q_i[t]$ per unit.
For simplicity, we assume that the costs $\{p_{i,j}[t],q_i[t] \}$ are i.i.d. across time slots\footnote{Once again, we point out that  the algorithm developed in this work can be generalized to the case when $\{p_{i,j}[t],q_i[t] \}$ are Markovian.}.
Further, we assume that the costs are bounded as $p_{i,j}[t] \leq p_{\max} \ \forall i,j,t$ for some finite $p_{\max},$ and 
$q_{i}[t] \leq q_{\max} \ \forall i,t$ for some finite $q_{\max}.$

The total cost incurred for energy transfer to
MG$_i$ during the time slot $t$ (denoted by $\text{Cost}_{i}[t]$)is given by
\begin{align}
\text{Cost}_{i}[t] = q_i[t] G_i[t] + \sum_{j \neq i} p_{j,i}[t] B_{j,i}[t],
\ i = 1,\dots,N \label{eqn:inputEnConstraint}.
\end{align}
The objective of the controller is to design the system parameters 
in order to minimize the time average cost of energy transfer across the grid subject to the renewable energy generation and battery constraints, stated as follows:
\beqa
&\dsp \min &   \limsup_{T \to \infty} \frac{1}{T}\sum^{T-1}_{t=0} \mathbb{E} \Big{[} \sum^N_{i = 1} \text{Cost}_{i}[t]  \Big{]} \label{eqn:CostMinOpt} \\
& s.t. & B_{i,i}[t]+ \sum_{j \neq i}  B_{j,i}[t] + G_i[t]
= \tilde{L}_i[t], \ \ \forall t, \forall i \nonumber \\
& & Y_i[t] + \sum_{j \neq i} B_{i,j}[t] \leq \tilde{X}_i[t], \qquad \qquad \ \forall t, \forall i \nonumber \\
& & E_i[t+1] = E_i[t]- B_{i,i}[t] +Y_i[t], \ \ \ \ \forall t, \forall i \nonumber \\
& & B_{i,i}[t] \leq \min(E_i[t],B^s_{\max}) \qquad \qquad \ \  \forall t, \forall i \nonumber \\
& & Y_{i}[t] \leq \min(E_{\max}-E_i[t],Y_{\max}) \qquad   \forall t, \forall i \nonumber \\
& & B_{i,j}[t] \leq B^{\text{ex}}_{\max} \qquad \qquad \qquad  \forall t,\forall j \neq i, \forall i. \nonumber
 \eeqa
During each time slot $t,$ the decision variables
are $Y_i[t],B_{i,i}[t],B_{i,j}[t] (\forall j \neq i),G_{i}[t] \ \forall i.$
We denote the optimal value of the cost function $\lim_{T \to \infty} \frac{1}{T}\sum^{T-1}_{t=0} \mathbb{E} \Big{[} \sum^N_{i = 1} \text{Cost}_{i}[t]  \Big{]}$ over all possible control actions
by $f^*_N.$ Note that we have explicitly mentioned 
the subscript $N$ to denote the optimal solution when
$N$ MGs cooperate.

\subsection{Discussion of the System Model}
We now provide some remarks on the system model. 

$\bullet$ Note that in this work, we assume that the excess renewable
energy from MG$_i$ can be exchanged with MG$_j$ only to satisfy
the load of MG$_j$ during the same time slot. In other words,
MG$_j$ cannot use the excess renewable energy of MG$_i$ to charge
its battery. Further, we assume that the energy stored in the
battery of MG$_i$ can only be used to serve its own load during
a future time slot, and there is no energy exchange possible
from the battery. These restrictions are imposed in order to clearly exhibit the trade-off between storing energy for future use and cooperation
in the current time slot (by exchanging energy in the current time slot with neighboring MGs). The framework developed in this work can be easily extended to incorporate all these cases.

$\bullet$ Transferring energy between
different elements of the grid incurs energy losses.
The cost of energy transfer $\{ p_{i,j}[t],q_i[t] \}$ considered in this work can be
interpreted as a price paid for these power losses. 
Typically, the distance between the MGs is much less compared to their distance from the macro-grid. Therefore, in practice, the cost of energy
exchange between the MGs is lower, i.e., $ p_{i,j}[t] < q_i[t].$
Moreover, the MGs can
be connected by a short distance DC power line which incurs
much less energy loss than the AC power line connection
between the MGs and the macro-grid \cite{Larruskain2005}.
The objective function can be viewed 
as minimizing the cost associated with the energy losses.

$\bullet$ A more practical set-up in which the MGs are connected to different buses, and energy exchange performed via a constrained transmission network is not considered this work.
This is done in order to clearly illustrate the trade-offs between cooperation and storage, without the complications
arising from physical power flow constraints.
The more practical case incorporating 
transmission network constraints is left for future investigation. 

\section{An analytical characterization of the optimal cost}
\label{sec:analoptcost}
Before solving the optimization problem in \eqref{eqn:CostMinOpt},
we first provide an analytical 
characterization of the time average cost for certain specific scenarios.

\subsection{Single MG scenario}
First, we consider the analysis only for a single
MG which is equipped with a storage device.
There is no energy to be exchanged, and 
any unsatisfied load must be fulfilled by
using the energy stored in storage device, or by
borrowing energy from the macro-grid.
We make the following simplifications.
\subsubsection{Modeling the energy arrival process}
We assume that
the energy arrivals per time slot are integer random variables.
In addition, we consider the probability mass function (p.m.f.) of the excess energy arrival process\footnote{In this section, we suppress the subscript $i$ since we are considering a single MG.}
i.e., 
\begin{align}
\tilde{X}[t]= X[t]-L
\end{align}
to be given by
\begin{align}
f_{\tilde{X}}(x)  = \begin{cases} M  & \ \text{w.p. \ $a_{M}$}  \\
                      M-1  & \ \text{w.p. \ $a_{M-1}$} \\
                      \vdots &   \\
                      0 & \ \text{w.p. \ $a_0$} \\
                      \vdots & \\
                       -(M-1)  & \ \text{w.p. \ $d_{M-1}$} \\       
                       -M  & \ \text{w.p. \ $d_{M},$} \\       
                    \end{cases}  \label{eqn:genpdf}
\end{align}
where
\begin{align}
d_{M} + d_{M-1} + \dots +a_0+\dots+ a_{M-1}+ a_{M} = 1.
\end{align}
Here, we have assumed that $-M \leq \tilde{X}[t] \leq M.$
Note that the random process $\tilde{X}[t]$ can be used to
model discrete random processes like the
Poisson process.

\subsubsection{Modeling the storage}
As before, let us assume that
the MG has storage capability represented
as a virtual energy queue whose
queue-length at time $t$ is given 
by $E[t],$ and the maximum battery capacity
being $E_{\max}.$
The evolution of the battery can be modeled 
as a random walk which evolves as follows:
\begin{align}
E[t+1] = \min \big{(}\max(E[t]+\tilde{X}[t],0),E_{\max}\big{)}.
\end{align}
Assuming that the arrivals are i.i.d. across time,
it can be verified that the random process
$(E[t], t\geq 0)$ is Markovian.
The state diagram of the Markov chain 
corresponding to the random process $E[t]$
is shown in Figure \ref{fig:MarkovChain}.
We denote $\pi_t[i] = \mathbb{P}(E[t] = i).$
Let us assume that the limiting state distribution function
$\pi(i) = \lim_{t \to \infty} \pi_t(i)$
exists for all $ 0 \leq i \leq E_{\max}$
and denote
$\piv = [\pi(1),\pi(2),\dots,\pi(E_{\max})].$
We denote the transition
matrix of this Markov 
chain by $\Pm$.
The Markov chain corresponding to
the random process $E[t]$ has a limiting 
distribution $\pi(i) = \lim_{t \to \infty} \pi_t(i)$
for all $ 0 \leq i \leq E_{\max},$ independent of
the initial distribution, if the system of equations
\begin{align}
\piv = \piv \Pm \\
\piv {\bf 1} = 1
\end{align}
has a strictly positive solution \cite{gallager1995}.

If the limiting distribution exists, then an analytical expression for the cost incurred in borrowing energy from the macro-grid 
(denoted by $\overline{\text{Cost}}$) can be derived as follows:
In the steady state, when the excess renewable energy
arrival is negative, i.e., $\tilde{X} = i$ for 
$i = -1,\dots,-M$ (which happens with probability
$d_i, i = 1,\dots,M,$ respectively), and
when the battery is in the state
$j$ for $j \leq i,$ (which happens with probability $\pi_j$), the MG has to borrow $i-j$ units of energy from
the marco-grid at the price $q_{\max}$ per unit.
Mathematically we can write,
\begin{align}
\overline{\text{Cost}} = q_{\max} \sum^M_{i = 1} \sum^ i _{j = 0} (i-j)  d_{i} \pi(j).
\end{align}

\begin{figure*}[!htbp]
\begin{center}
\includegraphics[width=0.8\textwidth]{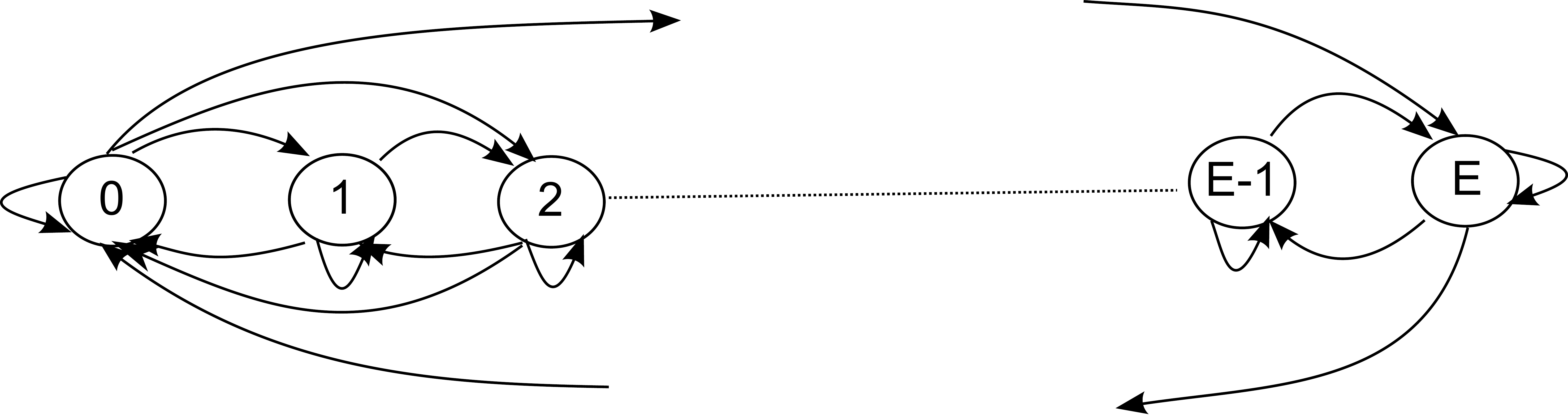}
\caption{State diagram for the Markov chain modeling
for the energy storage model with arrival process given in 
\eqref{eqn:genpdf}.}
\label{fig:MarkovChain}
\end{center}
\end{figure*}

\begin{example}
Consider the special case in which
the the excess energy arrival process has the 
following distribution:
\begin{align}
f_{\tilde{X}}(x) = \begin{cases} -1  & \ \text{w.p. \ $d$}  \\                      
                      0 &   \   \text{w.p. \ $1-a-d$} \\
                      1 &   \   \text{w.p. \ $a$}.                
                   \end{cases} \label{eqn:simplepdf}
\end{align}
The Markov chain associated with the random process $E[t]$ in this case is illustrated in
Figure \ref{fig:MarkovChainOneHop}.
\begin{figure}[!htbp]
\includegraphics[width=0.48\textwidth]{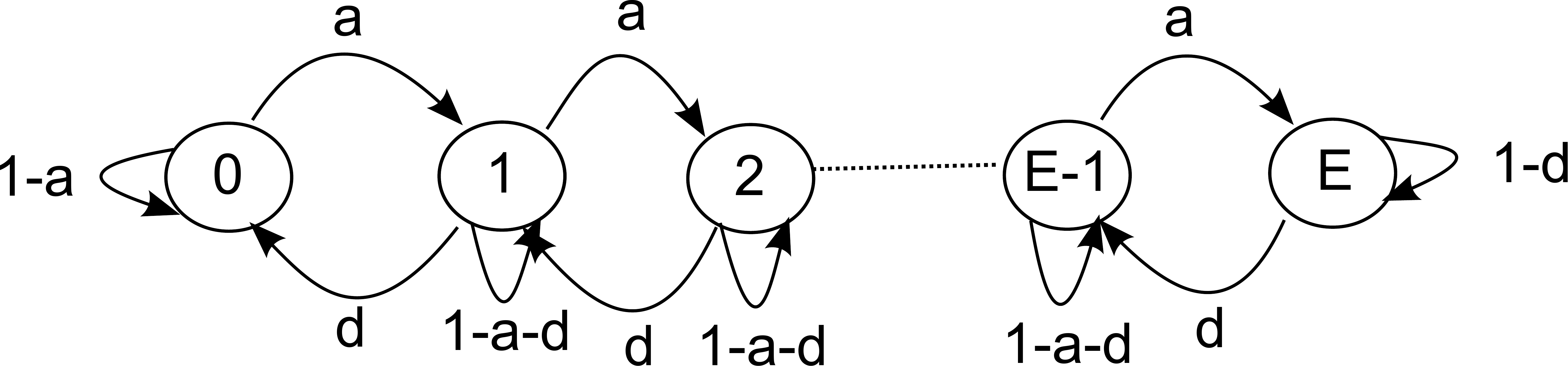}
\caption{State diagram for the Markov chain modeling
of the energy storage with p.m.f. of the arrival process 
as in \eqref{eqn:simplepdf}.}
\label{fig:MarkovChainOneHop}
\end{figure}
In this case, the stationary distribution 
of the Markov chain corresponding to the 
random walk has a simple form given by
\begin{align}
\pi(i) = r^i \LB \frac{1-r}{1-r^{(E_{\max}+1)}}\RB
\end{align}
where $r = a/d.$
Therefore, the cost of borrowing energy
from the marco-grid is given by
\begin{align}
\overline{\text{Cost}} & = q_{\max} \mathbb{P} (\tilde{X} = -1) \pi(0) = q_{\max} d \pi(0) \nonumber \\ & = q_{\max} d  \LB \frac{1-r}{1-r^{(E_{\max}+1)}} \RB \label{eqn:singleMGAnal}.
\end{align}
\end{example}

\subsection{The case of multiple micro-grids}
The characterization of the 
optimal cost in the case of multiple 
MGs is complicated due to the fact the
excess energy arrivals in
the grids are correlated because of the 
ability to share energy among themselves.
Therefore, we provide a closed form analytical characterization
only in the special case case of a completely symmetric two MG set-up. 

Consider a symmetric scenario consisting of 
2 MGs (MG$_1$ and MG$_2$ respectively), such that 
the cost of energy exchanges $p_{1,2} = p_{2,1} = p_{\max}$ 
and $q_{1} = p_{2} = q_{\max}.$
Each MG has an
excess energy arrival process 
whose p.m.f is given as in \eqref{eqn:simplepdf}.
Further, we consider a special case when the energy arrival process is independent across the two MGs.
In order to model the energy transfer 
between the grid, we consider the following 
policy.
Whenever, MG$_1$ produces excess energy and
MG$_2$ has an energy deficit, MG$_1$
transfers its excess energy to MG$_2$
with probability $\alpha \in [0,1].$ Otherwise,
MG$_1$ stores the excess energy
into its battery with a probability 
$1-\alpha.$
Since the system is perfectly symmetric, MG$_2$
does the same in the case when it over produces
and MG$_1$ has an energy deficit. In all other 
cases, there is no requirement to exchange energy between
them. Let us denote the random variable $\tilde{Z}_1$ 
and $\tilde{Z}_2$ representing the effective
excess energy arrival. It is related to
 $\tilde{X}_1$ and $\tilde{X}_2$ as follows:
\begin{itemize}
\item If $\tilde{X}_1 = 1 \ \text{and} \ \tilde{X}_2 = -1  $, then
\begin{align}
\tilde{Z}_1 = \begin{cases} 1  & \ \text{w.p. \ $(1-\alpha)ad$}  \\                      
                      0 &   \   \text{w.p. \ $ \alpha a d $}, 
                   \end{cases} 
\end{align}
and 
\begin{align}
\tilde{Z}_2 = \begin{cases} -1  & \ \text{w.p. \ $(1-\alpha)ad$}  \\                      
                      0 &   \   \text{w.p. \ $ \alpha a d $}. 
                   \end{cases} 
\end{align}
\item If $\tilde{X}_1 = -1 \ \text{and} \ \tilde{X}_2 = 1  $
\begin{align}
\tilde{Z}_1 = \begin{cases} -1  & \ \text{w.p. \ $(1-\alpha)ad$}  \\                      
                      0 &   \   \text{w.p. \ $ \alpha a d $},
                   \end{cases} 
\end{align}
and
\begin{align}
\tilde{Z}_2 = \begin{cases} 1  & \ \text{w.p. \ $(1-\alpha)ad$}  \\                      
                      0 &   \   \text{w.p. \ $ \alpha a d $}. 
                   \end{cases} 
\end{align}
\item In all other cases, $\tilde{Z}_1 = \tilde{X}_1$ and $\tilde{Z}_2 = \tilde{X}_2.$
\end{itemize}
Therefore, the unconditional p.m.f. of 
$\tilde{Z}_1$ and  $\tilde{Z}_2$ can be computed by 
integrating out the other variable.
It is given as
\begin{align}
f_{\tilde{Z}_1}(z) = \begin{cases} -1  & \ \text{w.p. \ $d(1-\alpha a)$}  \\                      
                      0 &   \   \text{w.p. \ $2 \alpha a d + (1-a-d)$} \\
                      1 &   \   \text{w.p. \ $a(1-\alpha d).$}                
                   \end{cases} 
\end{align}
The evolution of the battery can now be modeled 
as a random walk which evolves as 
\begin{align}
E_i[t+1] = \min(\max(E_i[t]+\tilde{Z}_i[t],0),E_{\max}).
\end{align}
The steady state distribution of the Markov chain
corresponding to this random walk is given by
\begin{align}
\pi(j) =  r^j \LB \frac{1-r}{1-r^{(E_{\max}+1)}} \RB,
\end{align}
where
\begin{align}
r = \frac{a(1-\alpha d)}{d(1-\alpha a)}.
\end{align}
Therefore, the cost of energy exchange within the grid has
two components: energy exchanged among the MGs 
and the energy borrowed from the macro-grid. 
Mathematically, this is given as
\begin{align}
\text{Cost}(\alpha) & = 2 \alpha ad p_{\max}+2 d(1-\alpha a) \pi(0) q_{\max}  \label{eqn:Cost2Grid}.
\end{align}
The minimum cost is then obtained by optimizing 
over the choice of $\alpha.$
Let us define
\begin{align}
\alpha^* = \argmin_{\alpha} \text{Cost}(\alpha),
\end{align}
and hence, the minimum cost of energy exchange 
is given by $\text{Cost}(\alpha^*).$
From \eqref{eqn:Cost2Grid}, we can analyze the following two extreme cases, namely, the case with no storage and
the case with infinite storage.
\begin{itemize}
\item $E_{\max} = 0$ - No storage \\
In this case $\pi(0) = 1.$
Therefore, 
\begin{align}
\overline{\text{Cost}} = 2 d q_{\max}+ ad \alpha (- q_{\max} +  p_{\max}),
\end{align}
and the cost is minimized when,
\begin{align}
\alpha = \begin{cases} 1  & \ \text{if \ $p_{\max} < q_{\max}$ }  \\                      
                      0 &   \   \text{else.}                     
                   \end{cases} 
\end{align}
Thus, in the absence of storage, the MGs must always 
share the excess available energy as long as
$p_{\max} < q_{\max}.$ The result is quite intuitive since in the absence of storage, one can always reduce the cost 
by exchanging energy locally between the MGs.
\item $E_{\max} = \infty$ - Infinite storage \\
In this case, $\pi(0) = \frac{d-a}{d(1-\alpha a)}.$
Therefore, 
\begin{align}
\overline{\text{Cost}} = 2 d (d-a) q_{\max}  + 2 ad \alpha.
\end{align}
Thus, the cost is minimized when
$\alpha = 0.$
This implies that in the presence of infinite storage,
any excess energy must always be stored 
rather than exchanging among them.
\end{itemize}

In what follows, we provide a practical algorithm 
to solve the time average cost minimization problem across the grid using the technique of Lyapunov optimization technique. 

\section{Algorithm Design based on Lyapunov Optimization}
\label{sec:LypOpt}
The Lyapunov optimization method provides simple online
solutions based only on the current knowledge of the system
state as opposed to traditional approaches such as dynamic
programming and Markov decision processes which suffer from very high complexity and require a-priori knowledge of the statistics of all the random processes in the system.

But first, we note that the technique of Lyapunov optimization is not directly applicable for solving \eqref{eqn:CostMinOpt}.
This is due to the presence of constraints \eqref{eqn:consolidatedBattDis}
and \eqref{eqn:consolidatedBattCharge}, which have the effect of
coupling the control decisions across time slots.
In order to circumvent this issue, we consider 
an approach similar to \cite{UrgaonkarSigmetric2011}, and
formulate a slightly modified version of this problem, 
stated as follows:
\beqa
&\dsp \min &   \lim_{T \to \infty} \frac{1}{T}\sum^{T-1}_{t=0} \mathbb{E} \Big{[} \sum^N_{i = 1} \text{Cost}_{i}[t]  \Big{]} \label{eqn:CostMinRelaxed} \\
& s.t. & \lim_{T \to \infty} \frac{1}{T}\sum^{T-1}_{t=0} \mathbb{E}[Y_i[t]] \leq \lim_{T \to \infty} \frac{1}{T}\sum^{T-1}_{t=0} \mathbb{E}[{B}_{i,i}[t]] \ \forall i \nonumber  \\
& & B_{i,i}[t]+ \sum_{j \neq i}  B_{j,i}[t] + G_i[t]
= \tilde{L}_i[t], \ \ \forall t, \forall i \nonumber \\
& & Y_i[t] + \sum_{j \neq i} B_{i,j}[t] \leq \tilde{X}_i[t], \qquad \qquad \ \forall t, \forall i \nonumber \\
& & B_{i,i}[t] \leq B^{\text{s}}_{\max}; B_{i,j}[t] \leq B^{\text{ex}}_{\max} \ \  \forall t,\forall j \neq i, \forall i, \nonumber \\
& & Y_i[t] \leq Y_{\max}  \forall t,\forall i \nonumber.
 \eeqa
Note that in \eqref{eqn:CostMinRelaxed}, all the constraints 
associated with the battery are relaxed, and a constraints
of the form $\lim_{T \to \infty} \frac{1}{T}\sum^{T-1}_{t=0} \mathbb{E}[Y_i[t]] \leq \lim_{T \to \infty} \frac{1}{T}\sum^{T-1}_{t=0} \mathbb{E}[{B}_{i,i}[t]] \ \forall i$ are added. Effectively, this constraint represents the 
condition for stability of the virtual energy queue associated with the battery. We will henceforth address this problem as the \emph{relaxed problem}. Let us denote by $g^*_N,$ the optimal
value of the cost function of the \emph{relaxed problem}
over all possible control decisions.
First, it is easy to see that $g^*_N \leq f^*_N,$ i.e. the solution to the relaxed problem 
acts as a lower bound on the original problem (since the relaxed problem has fewer constraints compared to the original problem and any feasible solution of the original problem is feasible for the relaxed problem as well). 

We now focus on solving the \emph{relaxed problem}.
It can be shown that the optimal solution to the
\emph{relaxed problem} can be obtained by the
method of stationary randomize policy, stated 
in the following theorem.
\begin{theorem}
\label{thm:StatRandom}
There exists a stationary and randomized policy
$\Pi$ that achieves
\begin{align}
\mathbb{E} \Big{[} \sum^N_{i = 1} \text{Cost}^{\Pi}_{i}[t]  \Big{]} = g^*_N \qquad \forall t,
\end{align}
and satisfies the constraints:
\begin{align}
&  \mathbb{E}[Y^{\Pi}_i[t]] \leq  \mathbb{E}[{B}^{\Pi}_{i,i}[t]] \ \ \forall i,\forall t \nonumber  \\
&  \mathbb{E} [B^{\Pi}_{i,i}[t]]+ \sum_{j \neq i} \mathbb{E} [ B^{\Pi}_{j,i}[t]] + \mathbb{E} [G^{\Pi}_i[t]]
= \tilde{L}_i[t], \ \ \forall t, \forall i \nonumber \\
&  \mathbb{E} [ Y^{\Pi}_i[t]] + \sum_{j \neq i} \mathbb{E} [ B^{\Pi}_{i,j}[t]] \leq \mathbb{E} [\tilde{X}_i[t]], \qquad \qquad \ \forall t, \forall i \nonumber \\
& \mathbb{E}[ B^{\Pi}_{i,i}[t]] \leq B^{\text{s}}_{\max}; \ 
\mathbb{E} [B^{\Pi}_{i,j}[t]] \leq B^{\text{ex}}_{\max}
\qquad  \forall t,\forall j \neq i, \forall i, \nonumber \\
&  \mathbb{E}[ Y^{\Pi}_i[t]] \leq Y_{\max}  \forall t,\forall i \nonumber.  \nonumber
\end{align}
\end{theorem}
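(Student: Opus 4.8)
The plan is to prove this via the standard ``optimality of stationary randomized policies'' argument for i.i.d.\ time-average stochastic optimization (\cite{NeelyBook}), specialized to the relaxed problem \eqref{eqn:CostMinRelaxed}. First I would bundle all exogenous randomness into a single per-slot state $\omega[t] := \big(\{X_i[t]\}_i,\{L_i[t]\}_i,\{p_{i,j}[t]\}_{i,j},\{q_i[t]\}_i\big)$, which is i.i.d.\ across $t$ by assumption and from which $\tilde L_i[t]=(L_i[t]-X_i[t])^+$ and $\tilde X_i[t]=(X_i[t]-L_i[t])^+$ are deterministic functions. For each realization $\omega$ the per-slot decisions $\big(Y_i,\{B_{i,i}\},\{B_{i,j}\},\{G_i\}\big)$ range over the set $\mathcal A(\omega)$ carved out by exactly the constraints of \eqref{eqn:CostMinRelaxed} that are \emph{not} time-averaged --- the demand equality, the excess-energy inequality, and the peak bounds $B_{i,i}\le B^{\text{s}}_{\max}$, $B_{i,j}\le B^{\text{ex}}_{\max}$, $Y_i\le Y_{\max}$; this $\mathcal A(\omega)$ is a nonempty compact polytope (nonempty since $G_i=\tilde L_i$ with all other variables zero is feasible, and bounded because every variable is at most a finite constant, using $G_i\le\tilde L_i\le L_{\max}$). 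Associating to each point of $\mathcal A(\omega)$ the vector with components $\big(\sum_i\text{Cost}_i,\ Y_1-B_{1,1},\ \dots,\ Y_N-B_{N,N}\big)\in\mathbb R^{N+1}$ and then averaging this vector over $\omega$ and over any $\omega$-measurable randomization of the action, I obtain a set $\mathcal R\subset\mathbb R^{N+1}$ of achievable one-slot expectation vectors; $\mathcal R$ is convex and compact because the $\mathcal A(\omega)$ are compact and the cost and drift maps are affine in the decisions.

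The second step is to show $g^*_N=\min\{\,c : (c,d_1,\dots,d_N)\in\mathcal R,\ d_i\le 0\ \forall i\,\}=:v^*$. For $g^*_N\le v^*$: any $(c,d)\in\mathcal R$ with $d\le 0$ is realized by a fixed $\omega$-contingent randomized rule, and applying it every slot yields, since $\omega[t]$ is i.i.d., the constant per-slot expectation vector $(c,d)$, so the rule is feasible for \eqref{eqn:CostMinRelaxed} --- its per-slot drift expectation is $d_i\le 0$, which is precisely the relaxed queue-stability constraint --- with time-average cost $c$. For $g^*_N\ge v^*$: for any policy feasible for \eqref{eqn:CostMinRelaxed}, each per-slot expectation vector lies in $\mathcal R$ (conditioned on $\omega[t]$ the action is some distribution over $\mathcal A(\omega[t])$), hence so does every Ces\`aro average over $t=0,\dots,T-1$ by convexity; compactness of $\mathcal R$ then yields a subsequential limit $(c^\infty,d^\infty)\in\mathcal R$ whose first component is the policy's time-average cost and whose drift components satisfy $d^\infty_i\le 0$ by the stability constraint, so $c^\infty\ge v^*$. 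Taking infima over feasible policies on both sides gives the identity.

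Third, pick an optimizer $(g^*_N,d^*)\in\mathcal R$ of this static program, with $d^*_i\le 0$ for all $i$. By Carath\'eodory's theorem in $\mathbb R^{N+1}$ this point is a convex combination of at most $N+2$ one-slot outcomes, which is exactly the description of a \emph{stationary randomized policy} $\Pi$: at each $t$, observe $\omega[t]$ and, independently of the history, select one of finitely many $\omega[t]$-dependent feasible actions with fixed probabilities. Running $\Pi$ at every slot gives, for all $t$, $\mathbb E\big[\sum_i\text{Cost}^\Pi_i[t]\big]=g^*_N$ and $\mathbb E[Y^\Pi_i[t]]-\mathbb E[B^\Pi_{i,i}[t]]=d^*_i\le 0$, while the demand equality, the excess-energy inequality and the peak bounds hold sample-path-wise (every action in every $\mathcal A(\omega)$ obeys them) and therefore also in expectation; this is exactly the list of conclusions in the theorem.

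The step I expect to be the main obstacle is the second one, and within it the direction $g^*_N\ge v^*$: showing that the long-run time-average expectation vector of an \emph{arbitrary}, possibly history-dependent and non-stationary, feasible policy cannot leave the closed convex hull $\mathcal R$ of one-slot outcomes. This is exactly where compactness of the per-slot action sets $\mathcal A(\omega)$ and uniform boundedness of all cost and decision quantities (from $p_{\max},q_{\max},L_{\max},Y_{\max},B^{\text{s}}_{\max},B^{\text{ex}}_{\max}<\infty$) are indispensable; once it is in place, the remainder is the bookkeeping of \cite{NeelyBook} carried out for this particular constraint set.
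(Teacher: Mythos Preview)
Your proposal is correct and follows essentially the same route the paper points to: the paper does not actually give a proof but simply states that ``the existence of such a policy can be proved by using the Caratheodory theorem, similar to the arguments in \cite{NeelyBook},'' and your argument is precisely the standard Neely-style reduction (bundling the i.i.d.\ exogenous state, forming the one-slot achievable region, matching $g^*_N$ to the static program, and invoking Carath\'eodory to extract a finitely-supported stationary randomized policy). In other words, you have written out what the paper omits for brevity, with no substantive deviation in approach.
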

The existence of such a policy can be proved 
by using the Caratheodory theorem,
similar to the arguments in \cite{NeelyBook} and omitted
here for brevity.
Note that due to the high dimensionality, it is not practical 
to solve the problem using the method of stationary randomized policies.

In what follows, we apply Lyapunov optimization 
to solve the \emph{relaxed problem} \eqref{eqn:CostMinRelaxed}. Further, we will
show that the solution developed for \eqref{eqn:CostMinRelaxed} by our method also satisfies
all the constraints associated with battery, hence making it applicable for solving the original problem \eqref{eqn:CostMinOpt}.

We proceed by considering the Lyapunov function associated with the virtual energy queues defined as follows: 
\begin{align}
\Psi[t] = \frac{1}{2} \sum_i (E_i[t]-\theta)^2
\end{align}
where $\theta$ is a perturbation
which is given by $\theta = B^s_{\max}+Vq_{\max}.$ 
The exact rationale behind the choice of the value of
$\theta$ will be specified later when we analyze
the algorithm performance.

We will now examine the Lyapunov drift 
which represents the expected change
in the Lyapunov function from one time slot to the other, 
which is defined as 
\begin{align}
\Delta [t] & = \mathbb{E} \LSB \Psi[t+1]-\Psi[t] \big{|} \Em[t] \RSB,
\end{align}
where the expectation is with respect to the random processes 
associated with the system, given the energy queue-length values 
$\Em[t] = [E_1[t],\dots,E_N[t]].$
Using the equation for evolution of the virtual energy
queue associated with the battery in \eqref{eqn:batteryEvol},
and some standard manipulations, it can be shown that
the Lyapunov drift can be bounded as
\begin{align}
\Delta[t] \leq C - \mathbb{E} \Big{[}\sum_i  (E_i[t]-\theta) (B_{i,i}[t]-Y_i[t]) \big{|} \Em[t] \Big{]} \label{eqn:LypUpBound}
\end{align}
where $C < \infty$ is a constant.
For completeness, we provide the proof of this
step in Appendix A.

We will henceforth denote $\tilde{E}_i[t] = E_i[t]-\theta.$
Adding the performance metric $V \mathbb{E} \LSB (\sum_{i,j} p_{i,j} B_{i,j}[t] + \sum_i q_i[t] G_i[t]) | \Em[t]\RSB $ (where $V$ is another control parameter which will be specified later) to both the sides and denoting $$\Delta_V [t] =  \Delta [t] + V \mathbb{E} \big{[} \sum_{i,j} p_{i,j} B_{i,j}[t] + \sum_i q_i[t] G_i[t] | \Em[t] \big{]},$$ we have
\begin{align}
&  \Delta_V[t]  \leq C - \mathbb{E} \Big{[} \sum_i \tilde{E}_i[t] \big{(} B_{i,i}[t]-Y_i[t]) \nonumber \\ & - V \big{(}\sum_{i,j} p_{i,j} B_{i,j}[t]
+ \sum_i q_i[t] G_i[t] \big{)} \big{|} \Em[t] \Big{]} \label{eqn:ModLyp}.
\end{align}
Using \eqref{eqn:energyTxfer}, we have
$G_i[t] = \tilde{L}_i[t] - B_{i,i}[t] - \sum_{j \neq i} B_{j,i}[t].$ Substituting for $G_i[t]$ in the 
right hand side \eqref{eqn:ModLyp}, we have,
\begin{align}
 \Delta_V[t]  & \leq C + \mathbb{E} \Big{[}- \sum_i \tilde{E}_i[t] \big{(} B_{i,i}[t]-Y_i[t]) \nonumber  \\ & + V \big{(} \sum_{i,j} p_{i,j} B_{i,j}[t]  \\ & + \sum_i q_i[t] (\tilde{L}_i[t] - B_{i,i} [t] - \sum_{j \neq i} B_{j,i} [t])  \big{)} \Big{|} \Em[t]\Big{]}\nonumber \\
& = C + \mathbb{E} \Big{[} V \sum_i q_i[t] \tilde{L}_i[t] + \sum_i \tilde{E}_i[t] Y_i[t] \nonumber \\ & \qquad \qquad + \sum_i  \sum_{j \neq i} V B_{i,j} [t] (p_{i,j}[t] -q_j[t]) \big{]} \nonumber \\ & \qquad \qquad - \sum_i B_{i,i}[t] (\tilde{E}_i[t]+V q_i[t]) \Big{|} \Em[t] \Big{]} \label{eqn:LypBound}.
\end{align}
From the theory of Lyapunov optimization (drift-plus penalty method), the control actions are chosen during each time slot to
minimize the bound on the modified Lyapunov drift 
function (on the right hand side of \eqref{eqn:LypBound})
\cite{NeelyBook}.
Before we proceed, we provide the main
intuition behind the solving the \emph{relaxed problem}
using this approach.
Notice that the \emph{relaxed problem} can viewed as minimizing
the time average cost of energy exchange in the grid while maintaining the stability of the virtual energy queue (battery).
The modified Lyapunov drift has two components, the Lyapunov
drift term $\Delta [t],$ and $V \times \text{Cost}[t]$ term. Intuitively, minimizing the Lyapunov drift term 
alone pushes the queue-length of the virtual energy queue to 
a lower value. 
The second metric $V \times \text{Cost}[t]$ can be viewed 
as the penalty term, with the parameter $V$ representing 
the trade-off between minimizing the queue-length drift
and minimizing the penalty function. Greater
value of $V$ represents greater priority 
to minimizing the cost metric at the expense of
greater size of the virtual energy queue and vice versa.
This is indeed the rationale behind 
minimizing the modified Lyapunov drift $\Delta_V[t]$
during each time slot.

The control algorithm using the aforementioned rule can be
described as follows. During each time slot 
$t,$ one must choose the control decisions as 
a solution to the following linear programming
problem (obtained by minimzing the right hand side of \eqref{eqn:LypBound}) :
\beqa
&\dsp \min_{Y_i,B_{i,i},B_{i,j}} &  \sum_i \tilde{E}_i[t] Y_i + V \sum_{j \neq i} B_{i,j} (p_{i,j}[t] -q_j[t]) \nonumber \\ & & \qquad \qquad -  \sum_i B_{i,i} (\tilde{E}_i[t]+V q_i[t]) \label{eqn:OPT_basic_NUM} \\
& s.t. & Y_i + \sum_{j \neq i} B_{i,j} \leq \tilde{X}_i[t] \nonumber \\
& & B_{i,i} + \sum_{j \neq i} B_{j,i} \leq \tilde{L}_i[t] \nonumber \\
& & 0 \leq Y_i \leq Y_{\text{max}}, \ 0 \leq B_{i,i} \leq B^{\text{s}}_{\max} \ \forall i, \nonumber \\ & &  0 \leq B_{i,j} \leq 
B^{\text{ex}}_{\max} \ \forall j \neq i, \forall i \nonumber.
 \eeqa
Let us denote the solution corresponding to
\eqref{eqn:OPT_basic_NUM} as $Y^*_i[t],B^*_{i,i}[t]$ 
and $B^*_{i,j}[t].$
The value of $G^*_i[t]$ is then given by
\begin{align}
G^*_i[t] = (L_i[t] - B^*_{i,i}[t] - \sum_{j \neq i} B^*_{j,i}[t])^+,
\end{align} 
where $(x)^+ = \max(x,0).$
\subsection{Algorithm Performance Analysis}
We will now analyze the performance of the algorithm described in the previous section.

\begin{lemma}
\label{lem:Vbound}
By choosing the parameters $V$ and $\theta$  as
\begin{align}
& 0 < V \leq \frac{E_{\max} - (Y_{\max}+B^s_{\max})}{q_{\max}} \label{eqn:Vbound} \\
& \theta = B^s_{\max} + Vq_{\max} \label{eqn:thetaVal}
\end{align}
the following hold true:
\begin{itemize}
\item[1.] If $E_i [t] > E_{\max}-Y_{\max},$ then $Y^*_{i}[t] = 0.$
\item[2.] If $E_i [t] < B^s_{\max},$ then $B^*_{i,i}[t] = 0.$
\end{itemize}
\end{lemma}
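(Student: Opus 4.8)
The plan is to read both claims directly off the structure of the per-slot linear program \eqref{eqn:OPT_basic_NUM}. Each of the variables $Y_i$ and $B_{i,i}$ enters the objective linearly, appears in no equality constraint, and appears among the remaining inequalities only via a nonnegativity bound, a box upper bound, and one ``$\le$'' constraint in which its coefficient is $+1$. Consequently, if its objective coefficient is strictly positive, then from any feasible point one may push that single variable down to $0$ while freezing all the others: feasibility is preserved (the ``$\le$'' constraint only becomes slacker), while the objective strictly decreases. Since the feasible polytope is bounded and nonempty (the origin is feasible, as $\tilde{X}_i[t],\tilde{L}_i[t]\ge 0$), the minimum is attained, so such a variable must equal $0$ at every optimum. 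Thus both claims reduce to checking the sign of the corresponding objective coefficient.

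For claim 1, the coefficient of $Y_i$ in \eqref{eqn:OPT_basic_NUM} is $\tilde{E}_i[t]=E_i[t]-\theta$. First I would observe that multiplying the bound \eqref{eqn:Vbound} by $q_{\max}>0$ gives $Vq_{\max}\le E_{\max}-Y_{\max}-B^s_{\max}$, so together with \eqref{eqn:thetaVal} we get $\theta=B^s_{\max}+Vq_{\max}\le E_{\max}-Y_{\max}$. Hence the hypothesis $E_i[t]>E_{\max}-Y_{\max}$ implies $\tilde{E}_i[t]=E_i[t]-\theta>0$, and the exchange argument above yields $Y^*_i[t]=0$.

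For claim 2, the coefficient of $B_{i,i}$ in \eqref{eqn:OPT_basic_NUM} is $-(\tilde{E}_i[t]+Vq_i[t])$, and I want to show it is strictly positive. Substituting $\theta$ from \eqref{eqn:thetaVal} and using $q_i[t]\le q_{\max}$,
\[
\tilde{E}_i[t]+Vq_i[t]=E_i[t]-B^s_{\max}-V\big(q_{\max}-q_i[t]\big)\le E_i[t]-B^s_{\max}.
\]
Thus the hypothesis $E_i[t]<B^s_{\max}$ forces $\tilde{E}_i[t]+Vq_i[t]<0$, i.e.\ the coefficient of $B_{i,i}$ is positive, and the same exchange argument gives $B^*_{i,i}[t]=0$.

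I do not expect a real obstacle here: the proof is essentially bookkeeping of signs. The two points that need a line of care are that the admissible interval for $V$ is nonempty, which is exactly the standing assumption \eqref{eqn:pracconstraint} ($E_{\max}>Y_{\max}+B^s_{\max}$), and that the single-variable perturbation indeed stays feasible, which holds precisely because in \eqref{eqn:OPT_basic_NUM} the variables $Y_i$ and $B_{i,i}$ never occur in an equality and only with coefficient $+1$ in a ``$\le$'' constraint.
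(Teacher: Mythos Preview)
Your proposal is correct and follows essentially the same approach as the paper's proof in Appendix~B: both arguments read the sign of the linear coefficient of $Y_i$ (namely $\tilde{E}_i[t]$) and of $B_{i,i}$ (namely $-(\tilde{E}_i[t]+Vq_i[t])$) in the per-slot LP \eqref{eqn:OPT_basic_NUM}, and then use \eqref{eqn:Vbound}--\eqref{eqn:thetaVal} together with $q_i[t]\le q_{\max}$ to translate the hypotheses on $E_i[t]$ into the required strict sign. The only difference is that you spell out the ``exchange'' justification (decrease the variable to $0$ while freezing the others, which preserves the $\le$ constraints and strictly decreases the objective), whereas the paper simply asserts this step.
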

\begin{proof}
See Appendix B.
\end{proof}
From the result of Lemma \ref{lem:Vbound}, it can be seen that the battery charging decisions are non-zero only when $E_i [t] < E_{\max}-Y_{\max}$ 
and the discharge decisions are non-zero only when $E_i [t] > B^s_{\max}.$ 
As a consequence of this lemma, 
it can be verified that the algorithm developed satisfies
all the constraints associated with the battery 
in the original problem, i.e., constraints \eqref{eqn:consolidatedBattDis} and \eqref{eqn:consolidatedBattCharge}.
Therefore, the algorithm developed is a feasible algorithm
for the original optimization problem, and it
can be applied to the original system under consideration
(with finite battery capacity constraints). This also 
justifies the choice of the perturbation parameter $\theta$
in our design.

Let us denote the cost function
$f_N[t] = \sum^N_{i = 1} \text{Cost}_i[t].$ We will now provide the main result related to the performance of our algorithm.
\begin{theorem}
\label{thm:PerBd}
For the algorithm developed in the previous subsection, 
the virtual energy queue-lengths can be bounded as follows:
\begin{align}
0 \leq E_i [t] \leq E_{\max} \qquad \forall t, i = 1,\dots,N,
\end{align}
and the time average cost function
achieved by this algorithm satisfies
\begin{align}
\limsup_{T \to \infty} \frac{1}{T}\sum^{T-1}_{t = 0}\mathbb{E}[{f_N[t]}] \leq g^*_N + \frac{\tilde{B}}{V} \label{eqn:performance}
\end{align}
where $\tilde{B} < \infty$ is a constant.
\end{theorem}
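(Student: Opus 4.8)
The plan is to establish the two assertions of the theorem separately: the deterministic queue bound $0\le E_i[t]\le E_{\max}$, which follows from Lemma~\ref{lem:Vbound} by induction on $t$, and the $O(1/V)$ optimality gap \eqref{eqn:performance}, which is a drift-plus-penalty telescoping argument with the stationary randomized policy of Theorem~\ref{thm:StatRandom} as the comparison policy.

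For the queue bound I would induct on $t$, starting from a valid initial state $E_i[0]\in[0,E_{\max}]$. The inductive step is a case split on $E_i[t]$. If $E_i[t]\le E_{\max}-Y_{\max}$, then \eqref{eqn:batteryEvol} and $Y_i[t]\le Y_{\max}$ give $E_i[t+1]\le E_i[t]+Y_{\max}\le E_{\max}$; if $E_i[t]>E_{\max}-Y_{\max}$, Lemma~\ref{lem:Vbound} gives $Y^*_i[t]=0$, so $E_i[t+1]\le E_i[t]\le E_{\max}$. Symmetrically, $E_i[t]\ge B^s_{\max}$ gives $E_i[t+1]\ge E_i[t]-B^s_{\max}\ge 0$, while $E_i[t]<B^s_{\max}$ gives $B^*_{i,i}[t]=0$ and hence $E_i[t+1]\ge E_i[t]\ge 0$; assumption \eqref{eqn:pracconstraint} guarantees $B^s_{\max}<E_{\max}-Y_{\max}$, so the regimes are consistent. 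I would also record here the consequence, via \eqref{eqn:Vbound}--\eqref{eqn:thetaVal}, that $B^s_{\max}\le\theta\le E_{\max}-Y_{\max}$, hence $|\tilde{E}_i[t]|=|E_i[t]-\theta|\le E_{\max}$ for all $t$ \emph{uniformly in $V$} --- this is exactly what keeps the eventual gap $O(1/V)$.

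For the cost bound, note that the control solving \eqref{eqn:OPT_basic_NUM} is, after adding back the control-independent term $V\sum_i q_i[t]\tilde{L}_i[t]$, the minimizer over all instantaneously feasible decisions of $-\sum_i\tilde{E}_i[t](B_{i,i}[t]-Y_i[t])+Vf_N[t]$. Feeding this minimality into the drift inequality \eqref{eqn:LypUpBound} of Appendix~A (with $V\,\mathbb{E}[f_N[t]\mid\Em[t]]$ added to both sides) yields, for any instantaneously feasible policy $\Pi$,
\begin{align}
\Delta_V[t]\le C+\mathbb{E}\left[-\sum_i\tilde{E}_i[t]\left(B^{\Pi}_{i,i}[t]-Y^{\Pi}_i[t]\right)+Vf^{\Pi}_N[t]\,\big|\,\Em[t]\right].\nonumber
\end{align}
Taking $\Pi$ to be the stationary randomized policy of Theorem~\ref{thm:StatRandom}, whose decisions depend only on the current arrivals, loads and prices and hence are independent of $\Em[t]$, the conditional expectation factors: $\mathbb{E}[f^{\Pi}_N[t]\mid\Em[t]]=g^*_N$ and $\mathbb{E}[B^{\Pi}_{i,i}[t]-Y^{\Pi}_i[t]\mid\Em[t]]=\mathbb{E}[B^{\Pi}_{i,i}[t]]-\mathbb{E}[Y^{\Pi}_i[t]]\in[0,B^s_{\max}]$. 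Combining this with $|\tilde{E}_i[t]|\le E_{\max}$ to absorb the cross term gives $\Delta_V[t]\le\tilde{B}+Vg^*_N$ with $\tilde{B}=C+NE_{\max}B^s_{\max}<\infty$ independent of $V$.

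Finally I would take full expectations and telescope: summing $\mathbb{E}[\Psi[t+1]]-\mathbb{E}[\Psi[t]]+V\mathbb{E}[f_N[t]]\le\tilde{B}+Vg^*_N$ over $t=0,\dots,T-1$, then dropping $\mathbb{E}[\Psi[T]]\ge 0$, using that $\mathbb{E}[\Psi[0]]$ is finite (it is bounded by $\tfrac{N}{2}E_{\max}^2$), dividing by $VT$ and letting $T\to\infty$ gives \eqref{eqn:performance}. I expect the main obstacle to be the step in the previous paragraph: one must notice that although the perturbation $\theta=B^s_{\max}+Vq_{\max}$ grows linearly with $V$, the admissibility condition \eqref{eqn:Vbound} confines $\theta$ to $[B^s_{\max},E_{\max}-Y_{\max}]$, so that $\tilde{E}_i[t]$ --- and therefore the cross term in the comparison with $\Pi$ --- stays bounded by a constant not depending on $V$; without this observation the comparison would only yield an additive $O(1)$ error rather than $O(1/V)$. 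A minor point of care is checking that $C$ from Appendix~A is itself $V$-free, which it is, since it arises only from $\tfrac12(B_{i,i}[t]-Y_i[t])^2\le\tfrac12\max(B^s_{\max},Y_{\max})^2$.
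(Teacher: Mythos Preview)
Your proposal is correct and follows essentially the same route as the paper: a case analysis on $E_i[t]$ using Lemma~\ref{lem:Vbound} for the queue bound, and a drift-plus-penalty comparison with the stationary randomized policy of Theorem~\ref{thm:StatRandom} followed by telescoping for the cost bound. In fact you are more careful than the paper at one point: the paper silently drops the cross term $\sum_i\tilde{E}_i[t]\big(\mathbb{E}[Y^{\Pi}_i[t]]-\mathbb{E}[B^{\Pi}_{i,i}[t]]\big)$ to arrive at $\Delta_V[t]\le C+Vg^*_N$, whereas you correctly observe that $\tilde{E}_i[t]$ need not be nonnegative and instead absorb the cross term into $\tilde{B}$ via the uniform-in-$V$ bound $|\tilde{E}_i[t]|\le E_{\max}$.
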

\begin{proof}
See Appendix C.
\end{proof}
Theorem \ref{thm:PerBd} implies that 
performance of our algorithm can be made arbitrarily close to the optimal value 
by increasing the value of parameter $V.$ 
However, this comes at the cost of increasing the battery
capacity $E_{\max}$  (due to the bound on the 
value of $V$ in \eqref{eqn:Vbound} , given $E_{\max}$).
Also, note that since $g^*_{N} \leq f^*_{N},$ the performance bounds of \eqref{eqn:performance} hold with respect $f^*_{N}$ as well (and are infact tighter).

\section{Numerical Results}
\label{sec:numresult}
In this section, we present some numerical results to
examine the analytical modeling of the cost function and the Lyapunov optimization based algorithm described in the previous section.

First, we plot the optimal time average cost as a function of the 
storage size for the single MG
case based on the analytical expression of \eqref{eqn:singleMGAnal}, and
by running the Lyapunov optimization based
algorithm for $T = 5000$ iterations in Figure 
\ref{fig:AnalyticalVsNumerical}. 
The p.m.f of the excess energy arrival process 
considered in the simulations is provided in \eqref{eqn:simplepdf}, with $d = 0.5, a = 0.2.$
We observe that 
there is a good match between the two curves and the 
time average cost decreases with an increase in storage 
capacity.

\begin{figure}[!htbp]
\begin{center}
\includegraphics[width=0.45\textwidth]{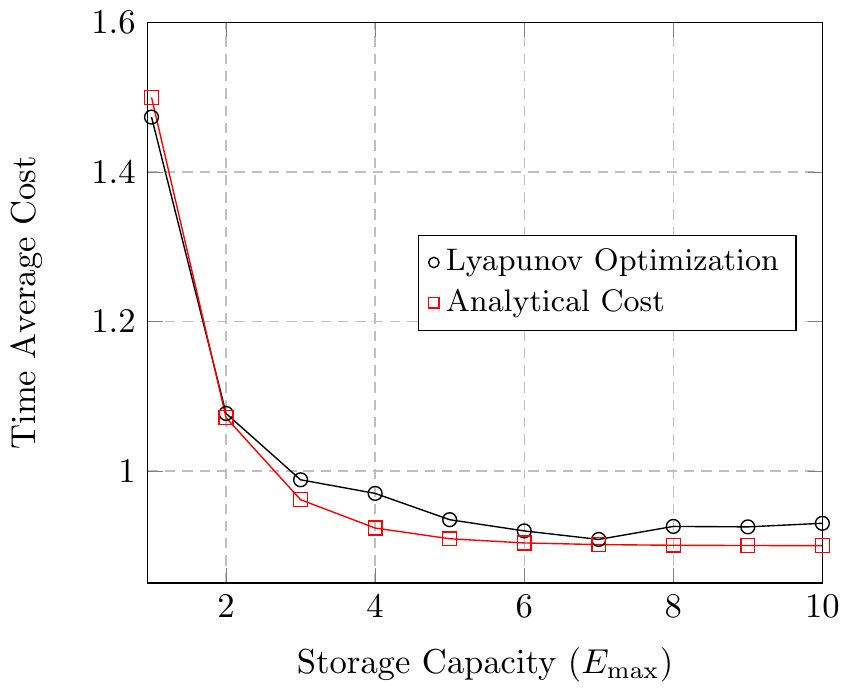}
\caption{Time average cost versus storage capacity based on analytical result and Lyapunov optimization based method.}
\label{fig:AnalyticalVsNumerical}
\end{center}
\end{figure}

Next, we consider the normalized time average 
cost for different combinations of the number
of cooperating MGs and the storage capacity.
We consider the following setting.
We assume that the actual harvested renewable
energy consists of two components, namely
the predicted component which is deterministic, and a prediction
error which is random i.e., $X_i[t] = \hat{X}_i[t]+ W_i[t]$. For the sake of simplicity, we assume
that the predicted component $\hat{X}_i[t]$ perfectly matches the aggregate load $L_i[t].$
We model the excess renewable energy production
$\tilde{X}_i[t] = X_i[t]-L_i[t]$
(also the prediction error $W_i[t]$ in this case) by the truncated normal distribution in accordance with some previous studies in this field \cite{MakarovLoutan2009}.
Therefore, excess renewable energy production
is a zero mean random process with its distribution 
given by $\tilde{X}_i[t] \sim \mathcal{N}(0,\nu^2),$
where $\nu$ is the standard deviation.
Note that following this model of energy production 
(i.e., $\hat{X}_i[t] = L_i[t]$ and mean value of 
$\tilde{X}_i[t] = 0$), we ensure that
the aggregate energy production of each MG matches the
aggregate load, and the only reason to store/exchange energy
is to compensate for the fluctuations in renewable energy
(thus capturing the most essential aspect of the problem
considered in this work).
In our numerical results, we choose
the value of $L_i[t] = L_i = 10 \ \text{MW}, \ \forall t,i$
and the variance $\nu = 3 \ \text{MW}.$

In order to construct a wind farm of MGs, we choose a square
grid of dimensions $10 \times 10$ km. The positions of 
the micro-grids are chosen by generating uniform random 
numbers within this grid. We consider that the macro-grid
is located at the coordinate $(20,20)$ km.
A random snapshot the power grid with $N = 5$ MGs is
shown in Figure \ref{fig:GridView}.

\begin{figure}[!htbp]
\begin{center}
\includegraphics[width=0.40\textwidth]{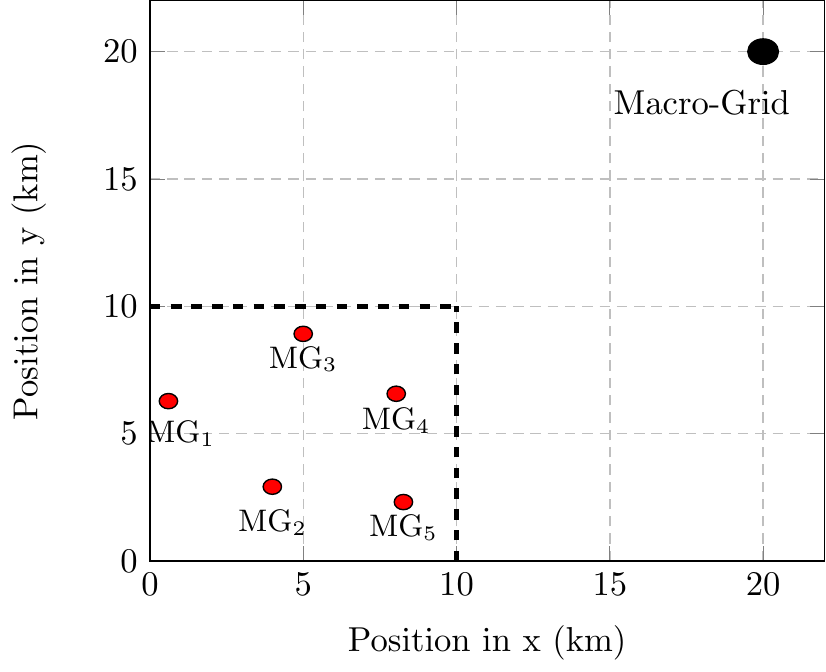}
\caption{A snapshot of the power grid with 5 MGs and the macro-grid. The red dots indicate
the position of the MGs and the black dot indicates the macro-grid. The dotted lines represent the boundary of the wind farm.}
\label{fig:GridView}
\end{center}
\end{figure}

We assume that the cost of transporting energy
among different elements of the grid is directly proportional
to the distance between them.
Therefore, $p_{i,j}[t] = \beta d_{i,j}, \ \forall t,$
units
(where $d_{i,j}$ is the distance between MG$_i$ and
MG$_j$ in km) and $q_{i}[t] = \beta D_{i}, \ \forall t,$
(where $D_{i}$ is the distance between MG$_i$ and the
macro-grid in km) and $\beta$ being the proportionality constant.
In our numerical results, we use $\beta = 1.$
For each random snapshot the power grid, we run the simulation for $T = 5000$ time slots. 
Let us consider the time average cost incurred 
\begin{align*}
\bar{C}_N & = \frac{1}{T}\sum^{T-1}_{t = 0}\sum^N_{i = 1} C_i[t] \\ &=
\frac{1}{T} \sum^N_{i = 1} \big{(}\sum_{j \neq i} \beta d_{i,j}  \sum^{T-1}_{t = 0} B_{i,j}[t] + q_i D_i \sum^{T-1}_{t = 0} G_i[t]\big{)}
\end{align*}
in transferring energy across the grid, where $N$ is the number of cooperating MGs.
In order to average out the position of the elements
of the grid, we generate $100$ random snapshots
of the power grid and obtain an average of the normalized cost
for these $100$ positions.
We plot the normalized cost incurred $\frac{\bar{C}_N}{N}$
as a function of $N,$ for different values of storage capacity
($2$, $5$, $10$, $20$ and $50$ MWh) in Figure \ref{fig:CostVsCoopMG_Storage}. 
In each of these cases, we choose $B^s_{\max}$ and $Y_{\max}$
to satisfy the constraint \eqref{eqn:pracconstraint}.
Specifically, we choose $(B^s_{\max},Y_{\max}) = (0.5,0.5), (1,1), (2,2), (5,5), (10,10)$ MW in the five cases
of storage capacity ($2$, $5$, $10$, $20$ and $50$ MWh) respectively. The value of $B^{\text{ex}}_{\max}$ chosen in $10$ MW.

\begin{figure}[!htbp]
\begin{center}
\includegraphics[width=0.45\textwidth]{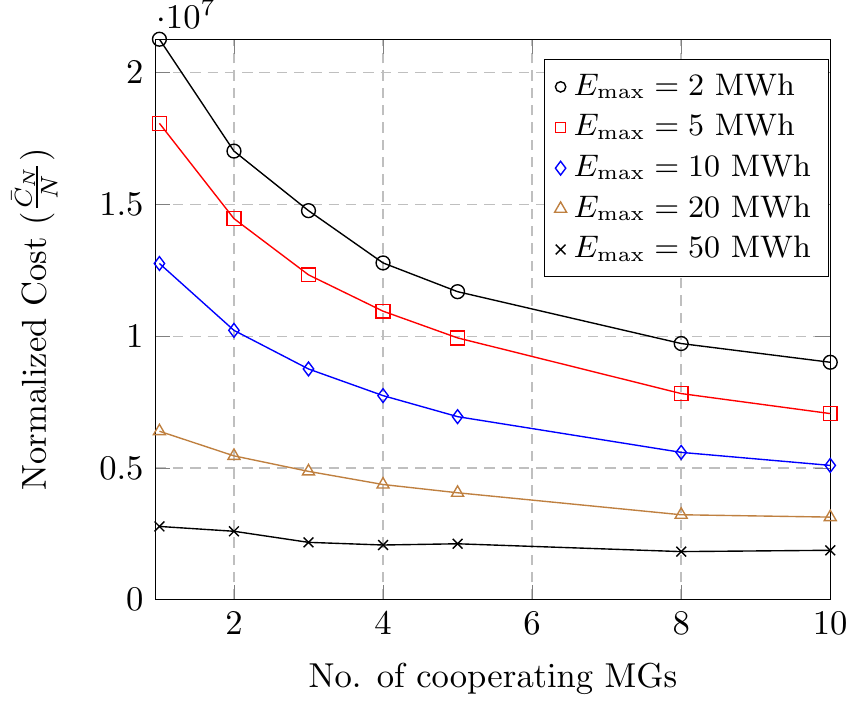}
\caption{Normalized cost versus the number of cooperating MGs for different values of storage capacity.}
\label{fig:CostVsCoopMG_Storage}
\end{center}
\end{figure}
The following observations can be made.
\begin{itemize}
\item For a given storage capacity, the cost of energy exchange decreases with an increase in the number of cooperating MGs.  
This is due to the fact that greater number of cooperating MGs leads to a greater diversity in energy production and hence greater possibility of sharing energy among the MGs (hence reducing the need for borrowing energy from the macro-grid).
\item The decrease in the normalized cost (as a function of
the number of cooperating MGs) is greater for lower values of 
storage capacity. For higher values of storage capacity, the 
normalized cost does not reduce with increasing $N$. This is due to the fact that with greater storage capacity, the MGs are able to store any excess harvested energy (during the time slots when the harvested energy is greater than the aggregate load) and use it during the time slots when the harvested energy is deficient, thereby, eliminating the need for energy cooperation.
\item Further, most of the reduction in the time average cost
is achieved by cooperation among only a few neighboring MGs.
The incremental gain obtained by cooperation among large number of MGs is not significant. 
\end{itemize}

Next, we try to examine the following question:
for a given number of cooperating MGs, what is the storage
capacity needed per MG to eliminate the need for borrowing 
energy from the marco-grid? In order to do so, we 
consider a hypothetical scenario in which
$p_{i,j}[t] = p_{i,j} = \beta $ unit $\forall i,j,t$ and $q_i[t] = q_i = 3 \beta$ units $\forall i,t$ (both being independent of
the distance between elements of the grid).
Once again, we choose $L_i[t] = L_i = 10 \ \text{MW}, \ \forall t,i$ and $\nu = 3 \ \text{MW}.$
We look for the combination of 
the storage and number of cooperating MGs that
yields a normalized time average cost 
\begin{align*}
\frac{\bar{C}}{N} = \beta \times 10^7 \ \text{units}
\end{align*}
as the bench mark (The choice of $\beta \times 10^7$ units comes from the fact that 
$p_{i,j}[t] = \beta \ \text{units}$ and 
$10^7$ corresponds to the $10$ MW demand per time slot).
 We plot the optimal value of $N$ 
and $E_{\max}$ required to make the normalized cost below 1 unit
in Figure \ref{fig:CoopVsStorage}.
In our numerical results, we choose $\beta = 1.$
It can be seen that for a given number of cooperating MGs,
there exists an optimal storage capacity requirement to eliminate
the need for borrowing energy from the macro-grid.
It is evident that as the number of cooperating MGs increases,
the optimal storage capacity requirement reduces.

\begin{figure}[!htbp]
\begin{center}
\includegraphics[width=0.45\textwidth]{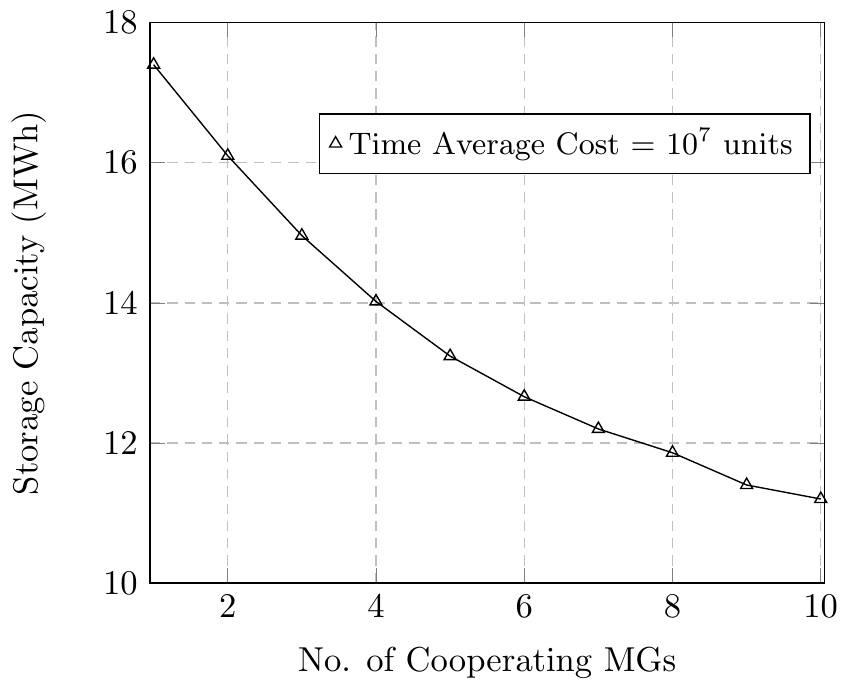}
\caption{Storage Capacity Vs Number of Cooperating MGs to achieve a time average cost of $10^7$ units.}
\label{fig:CoopVsStorage}
\end{center}
\end{figure}

A more practical case in which the algorithm
is implemented on the renewable energy data provided by
National Renewable Energy Laboratory (NREL) of the United
States is presented in \cite{LakshQuekPoor2014} and
similar results have been obtained.

\section{Conclusions and Future Work}
\label{sec:conclusion}
In this work, we explored the benefits of energy storage and cooperation among interconnected MGs as a means to combat the uncertainty in harvesting renewable energy. We modeled 
the set-up as an optimization problem to minimize the 
cost of energy exchange among the grid for a given storage 
capacity at the MGs.
First, we provided an analytical expression for the time average
cost of energy exchange in the presence of storage
by analyzing the steady state of the system 
 for some special cases of interest.
We then developed an algorithm based on Lyapunov optimization
to solve this problem, and provided performance analysis
of this algorithm.
Our results show that in the presence of limited storage devices, the grid can benefit greatly by cooperating even among only a few distributed sources.
However, in the presence
of large storage, cooperation does not yield much benefit.
Our solution can be useful for the power
grid designer in terms of choosing the optimal
combination of storage size and cooperation in order to meet a specific cost criterion. 
Since this work is a first step towards exploring the trade-offs between cooperation and storage, we ignored the transmission network constraints throughout. The analysis of how these constraints impact the trade-off between cooperation and storage is left for future investigation.

\section*{Appendix A : Proof of \eqref{eqn:LypUpBound}}
First note that
\begin{align*}
E_i[t+1]-\theta = E_i[t]-\theta -B_{i,i}[t]+ Y_i[t]
\end{align*}
Squaring both sides of the equation, we obtain,
\begin{align}
(E_i[t+1]-\theta)^2 & = (E_i[t]-\theta+Y_i[t]-B_{i,i}[t])^2 \nonumber \\
& = (E_i[t]-\theta)^2+(B_{i,i}[t]-Y_i[t])^2 \nonumber \\ & \qquad -2 (E_i[t]-\theta)
(B_{i,i}[t]-Y_i[t]) \label{eqn:papa20}
\end{align}
The term $(B_{i,i}[t]-Y_i[t])^2 \leq Y^2_{\max}+B^2_{\max} \defines C.$
Using this bound and rearranging \eqref{eqn:papa20}, we obtain
\begin{align}
(E_i[t+1]-\theta)^2 & - (E_i[t]-\theta)^2 \leq 
C \nonumber \\ & -2 (E_i[t]-\theta)
(B_{i,i}[t]-Y_i[t]).
\end{align}
Summing it over $i = 1,\dots,N$ and taking the conditional expectation on both sides given
$\Em[t]$, we obtain the bound in \eqref{eqn:LypUpBound}.

\section*{Appendix B : Proof of Lemma \ref{lem:Vbound}}
Let us first focus on statement 1.
First, note that since the objective is
to minimize \eqref{eqn:OPT_basic_NUM}, it can be 
easily inferred that 
$Y^*_i[t] = 0 $ when $\tilde{E}_i[t] > 0,$
i.e., $E_i[t] - Vq_{\max}+B^s_{\max} > 0.$
This implies that $Y^*_i[t] = 0 $ when
\begin{align}
 E_i[t] > Vq_{\max}+B^s_{\max} \label{eqn:jaja111}.
\end{align}
Using the bound on the value of $V$ from 
\eqref{eqn:Vbound}, in equation\eqref{eqn:jaja111}, we can conclude that
$Y^*_i[t] = 0 $ when
\begin{align}
 E_i[t] & > E_{\max} - (B^s_{\max}+Y_{\max})+B^s_{\max} \nonumber \\
  & = E_{\max} - Y_{\max}.
\end{align}
Let us now turn to statement 2.
Once again, from the objective function in
\eqref{eqn:OPT_basic_NUM}, it is clear that
$B^*_{i,i}[t] = 0$ if $\tilde{E}_i[t] + Vq_i[t] < 0.$
Substituting for $\tilde{E}_i[t],$ we have,
\begin{align}
& E_i[t] -( Vq_{\max}+B^s_{\max}) + Vq_i[t] < 0 \nonumber \\
& E_i[t]- B^s_{\max} < 0, 
\end{align}
where the last step follows since $q_i[t] \leq q_{\max}.$
Therefore, $B^*_{i,i}[t] = 0$ when $E_i[t]- B^s_{\max} < 0.$

\section*{Appendix C: Proof of Theorem \ref{thm:PerBd}}
We will first prove the bound on the battery size
$0 \leq E[t] \leq E_{\max}.$ We use the analysis
similar to \cite{UrgaonkarSigmetric2011} to obtain 
a bound on the battery size.
We consider four cases.
\begin{itemize}
\item Case 1 : $Vq_{\max}+B^s_{\max} \leq E_i[t] \leq E_{\max}.$
In this case, we have that $Y^*_i[t] = 0.$ Therefore, 
$E_i[t+1] \leq E_i[t] \leq E_{\max}.$
\item Case 2 : $  E_i[t] < Vq_{\max}+B^s_{\max}.$
In this case, we have that $Y^*_i[t] \leq Y_{\max}.$
Therefore, $E[t+1] \leq Vq_{\max}+B^s_{\max}+Y_{\max} \leq E_{\max}, $ where the last inequality follows 
from the range of values of $V$ as considered in \eqref{eqn:Vbound}.
\item Case 3 : $0 \leq E[t] \leq B^s_{\max}.$
In this case, $B^*_{i,i}[t] = 0.$ Therefore, 
$E[t+1] \geq E[t] \geq 0.$
\item Case 4 : $E[t] > B^s_{\max}.$
In this case, $B^*_{i,i}[t] \leq  B^s_{\max}$
and therefore, $E[t] \geq 0.$
\end{itemize}

We next proceed to prove the result on
time average performance of the algorithm.
Consider the bound on the Lyapunov drift function of
\eqref{eqn:LypBound}.
It is clear that the control actions chosen
according to the solution of \eqref{eqn:OPT_basic_NUM}
minimize the bound on the Lyapunov
function over all possible control actions.
Comparing it with the control action
chosen according to any stationary and randomized
policy (which we will denote with the
superscript $\Pi$), we have,
\begin{align}
 \Delta_V[t]  
& \leq C + \mathbb{E} \Big{[} V \sum_i q_i[t] \tilde{L}_i[t] + \sum_i \tilde{E}_i[t] Y^{*}_i[t] \nonumber \\ & \qquad + \sum_{i ,j \neq i} V B^{*}_{i,j} [t] (p_{i,j}[t] -q_j[t])  \nonumber \\ & \qquad - \sum_i B^{*}_{i,i}[t] (\tilde{E}_i[t]+V q_i[t]) \big{|} \Em[t] \Big{]}  \\
& \leq C + \mathbb{E} \Big{[} V \sum_i q_i[t] \tilde{L}_i[t] + \sum_i \tilde{E}_i[t] Y^{\Pi}_i[t] \nonumber \\ & \qquad + \sum_{i,j \neq i} V B^{\Pi}_{i,j} [t] (p_{i,j}[t] -q_j[t]) \nonumber \\ & \qquad - \sum_i B^{\Pi}_{i,i}[t] (\tilde{E}_i[t]+V q_i[t]) \big{|} \Em[t] \Big{]}.
\end{align}
Rearranging, we obtain
\begin{align}
 & \Delta_V[t]  
 \leq C + \mathbb{E} \Big{[} V \sum_i q_i[t] (\tilde{L}_i[t] - B^{\Pi}_{i,i}[t]-\sum_{j \neq i}B^{\Pi}_{j,i} [t]) \nonumber \\ & +\sum_i \tilde{E}_i[t]( Y^{\Pi}_i[t]-B^{\Pi}_{i,i}[t]) + \sum_{i,j \neq i}   V p_{i,j}[t] B^{\Pi}_{i,j} [t]  \big{|}
 \Em[t] \Big{]} \nonumber \\
 & = C + \mathbb{E} \Big{[} V \sum_i q_i[t] G^{\Pi}_i[t]\nonumber \\ & +\sum_i \tilde{E}_i[t]( Y^{\Pi}_i[t]-B^{\Pi}_{i,i}[t]) + \sum_{i,j \neq i}   V p_{i,j}[t] B^{\Pi}_{i,j} [t]  \big{|}
 \Em[t] \Big{]} \label{eqn:here1321},
\end{align}
where the last step we have used the fact that
$G^{\Pi}_i[t] = \tilde{L}_i[t] - B^{\Pi}_{i,i}[t]+\sum_{j \neq i}B^{\Pi}_{j,i} [t].$
In particular, let us consider the 
stationary and randomized
policy $\Pi$ from Theorem
\ref{thm:StatRandom},
which achieves the cost $g^*_N.$ 
Using this policy in \eqref{eqn:here1321}, we obtain
\begin{align}
 & \Delta_V[t]  \leq C + V g^*_N.
\end{align}
Taking the expectation on both the sides 
and summing from $t = 0,\dots,T-1$
, normalizing by $T$ and taking the limit,
it can be shown that 
\begin{align}
\limsup_{T \to \infty} \frac{1}{T} \sum^{T-1}_{t = 0} \mathbb{E}[f_N[t]] \leq g^*_N+\frac{\tilde{B}}{V}.
\end{align}

\bibliographystyle{IEEEtran}
\bibliography{IEEEabrv,bibliography}

\begin{IEEEbiography}
[{\includegraphics[width=1in,height=1.25in,clip,keepaspectratio]{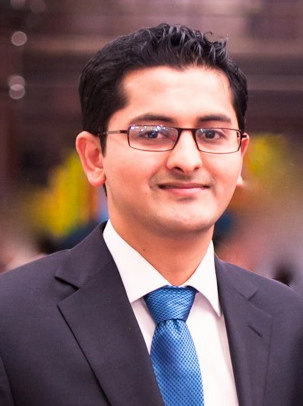}}]
{Subhash Lakshminarayana}
(S'07-M'12)
received his M.S. degree in Electrical
and Computer Engineering from The Ohio State
University in 2009, and his Ph.D. from the
Alcatel Lucent Chair on Flexible Radio
and the Department of Telecommunications at SUPELEC, France in 2012.
Currently he is with The Singapore University of Technology
and Design (SUTD). 

Dr. Lakshminarayana has held a visiting research appointment
at Princeton University from Aug-Dec 2013. He has also been has been a student researcher at the Indian Institute of Science, Bangalore during 2007. He has served as a member of Technical Program Committee for IEEE PIMRC in 2014, IEEE VTC in 2014, and IEEE WCNC in 2014.
His research interests broadly spans wireless communication and signal processing with emphasis on small cell networks (SCNs), cross-layer design wireless networks, MIMO systems, stochastic network optimization, energy harvesting and smart grid systems.
\end{IEEEbiography}

\begin{IEEEbiography}[{\includegraphics[width=1in,height=1.25in,clip,keepaspectratio]{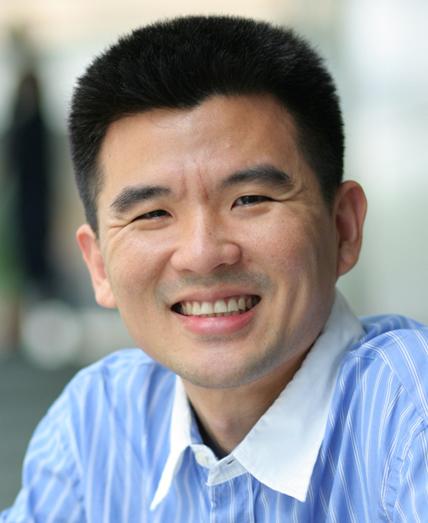}}]
{Tony Q.S. Quek}(S'98-M'08-SM'12) received the B.E.\ and M.E.\ degrees in Electrical and Electronics Engineering from Tokyo Institute of Technology, Tokyo, Japan, respectively. At Massachusetts Institute of Technology (MIT), Cambridge, MA, he earned the Ph.D.\ in Electrical Engineering and Computer Science. Currently, he is an Assistant Professor with the Information Systems Technology and Design Pillar at Singapore University of Technology and Design (SUTD). He is also a Scientist with the Institute for Infocomm Research. His main research interests are the application of mathematical, optimization, and statistical theories to communication, networking, signal processing, and resource allocation problems. Specific current research topics include cooperative networks, heterogeneous networks, green communications, smart grid, wireless security, compressed sensing, big data processing, and cognitive radio.

Dr.\ Quek has been actively involved in organizing and chairing sessions, and has served as a member of the Technical Program Committee as well as symposium chairs in a number of international conferences. He is serving as the TPC co-chair for IEEE ICCS in 2014, the Wireless Networks and Security Track for IEEE VTC Fall in 2014, the PHY \& Fundamentals Track for IEEE WCNC in 2015, and the Communication Theory Symposium for IEEE ICC in 2015. He is currently an Editor for the {\scshape IEEE Transactions on Communications}, the {\scshape IEEE Wireless Communications Letters}, and an Executive Editorial Committee Member for the {\scshape IEEE Transactions on Wireless Communications}. He was Guest Editor for the {\scshape IEEE Communications Magazine} (Special Issue on Heterogeneous and Small Cell Networks) in 2013 and the {\scshape IEEE Signal Processing Magazine} (Special Issue on Signal Processing for the 5G Revolution) in 2014.

Dr.\ Quek was honored with the 2008 Philip Yeo Prize for Outstanding Achievement in Research, the IEEE Globecom 2010 Best Paper Award, the 2011 JSPS Invited Fellow for Research in Japan, the CAS Fellowship for Young International Scientists in 2011, the 2012 IEEE William R. Bennett Prize, and the IEEE SPAWC 2013 Best Student Paper Award.
\end{IEEEbiography}

\begin{IEEEbiography}[{\includegraphics[width=1in,height=1.25in,clip,keepaspectratio]{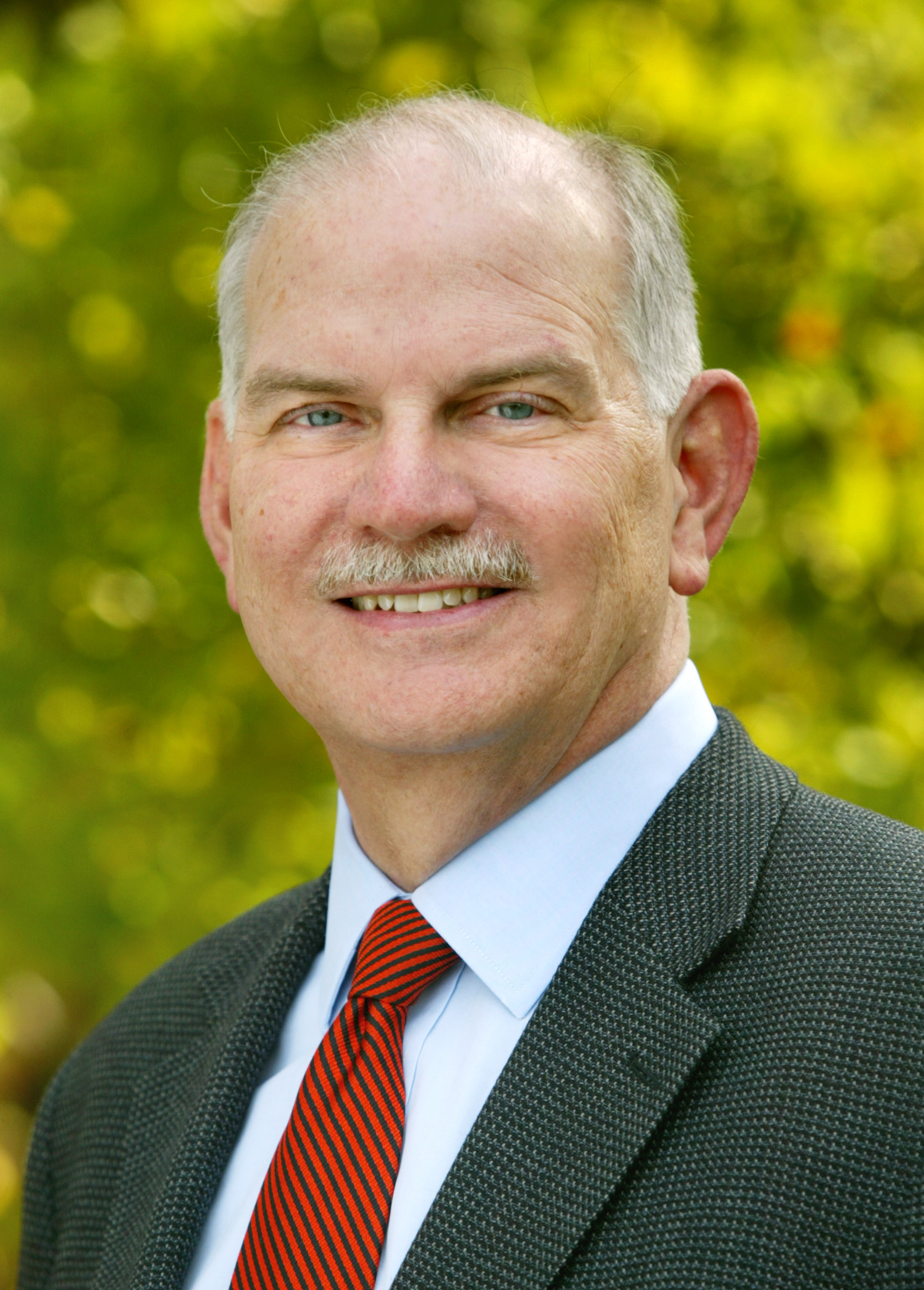}}]
H. Vincent Poor (S’72, M’77, SM’82, F’87) received the Ph.D. degree in EECS from Princeton University in 1977.  From 1977 until 1990, he was on the faculty of the University of Illinois at Urbana-Champaign. Since 1990 he has been on the faculty at Princeton, where he is the Michael Henry Strater University Professor of Electrical Engineering and Dean of the School of Engineering and Applied Science. Dr. Poor's research interests are in the areas of information theory, statistical signal processing and stochastic analysis, and their applications in wireless networks and related fields including social networks and smart grid. Among his publications in these areas are the recent books \emph{Principles of Cognitive Radio} (Cambridge University Press, 2013) and \emph{Mechanisms and Games for Dynamic Spectrum Allocation} (Cambridge University Press, 2014).

Dr. Poor is a member of the National Academy of Engineering, the National Academy of Sciences, and Academia Europaea, and is a fellow of the American Academy of Arts and Sciences, the Royal Academy of Engineering (U. K), and the Royal Society of Edinburgh. He received the Marconi and Armstrong Awards of the IEEE Communications Society in 2007 and 2009, respectively. Recent recognition of his work includes the 2010 IET Ambrose Fleming Medal for Achievement in Communications, the 2011 IEEE Eric E. Sumner Award, and honorary doctorates from Aalborg University, the Hong Kong University of Science and Technology, and the University of Edinburgh.  
\end{IEEEbiography}

\end{document}